\newtheorem{Theorem}{Theorem}
\newtheorem{Remark}{Remark}
\newtheorem{Example}{Example}
\begin{document}

\title{Noninvertibility as a requirement for creating a semigroup under convex combinations of channels}
	\author{Vinayak Jagadish}
	\email{vinayak.jagadish@uj.edu.pl}
	\affiliation{Faculty of Physics, Astronomy and Applied Computer Science, Jagiellonian University, 30-348 Krak{\'o}w, Poland}
			\author{R. Srikanth}
	\affiliation{Poornaprajna Institute of Scientific Research,
		Bangalore- 560 080, India}
	\author{Francesco Petruccione}
	\affiliation{Quantum Research Group, School  of Chemistry and Physics,
		University of KwaZulu-Natal, Durban 4001, South Africa}\affiliation{ National Institute for Theoretical and Computational Sciences (NITheCS),  Durban 4001, South Africa}

\begin{abstract} 
We study the conditions under which a semigroup is obtained upon convex combinations of channels. In particular, we study the set of Pauli and generalized Pauli channels. We find that mixing only semigroups can never produce a semigroup.  Counter-intuitively, we find that for a convex combination to yield a semigroup, most of the input channels have to be noninvertible.
\end{abstract}
\maketitle  
\section{Introduction} 
No quantum system can truly be considered to be completely isolated from its surroundings~\cite{petruccione}. Ultimately all quantum systems will be subjected to coupling to what is known as their external environment. The theory of open quantum systems has gained renewed attention due to its fundamental aspects as well as the application to new technologies~\cite{haroche_exploring_2006}. Quantum non-Markovianity~\cite{li_concepts_2017,de_vega_dynamics_2017} is a fascinating theme in the study of open quantum dynamics that continues to pose conceptual challenges and surprises.

The finite time evolution $\rho(0) \to \rho(t) = \mathcal{E}[\rho]$ of an open quantum system in a state $\rho(0)$ is characterized by a
completely positive, trace-preserving (CPTP) dynamical map~\cite{sudarshan_stochastic_1961}, also known as the quantum channel~\cite{Quanta77}. One could also have a continuous-in- time description of the system of interest which is given by the differential form of the map~\cite{gorini_completely_1976}, $\dot{\mathcal{E}} = L\mathcal{E}$, with \begin{equation}
L[\rho]=-\frac {\imath}{\hbar} [H,\rho]+\sum_i \gamma_i (t)
\left(G_i\rho G_i^\dagger-\frac {1}{2}\{G_i^\dagger G_i,\rho\}\right),
\label{gksl}
\end{equation}
where $H$ is the effective Hamiltonian, $G_i$ are the noise operators, and $\gamma_i$ are the decoherence rates. Going a step further, 
Eq. (\ref{gksl}) could be generalized to the time-dependent case with the time-local generator $L(t)$. In general, the decoherence rates $\gamma_i(t)$ need not be positive. Quite generally (barring the existence of noninvertibility), maps and generators for the channel are equivalent and interconvertible~\cite{hall2010}. The divisibility property of a dynamical map is as follows:
\begin{equation}
\mathcal{E}(t_b, t_a) = \mathcal{E}(t_b, t)\mathcal{E}(t, t_a),\quad \forall t_b\geq t \geq t_a.
\label{cpdivdef}
\end{equation}
A quantum evolution of a system starting at time $t_a$ is CP divisible if $\forall t,t_b, \mathcal{E}(t_b, t)$ is CP and satisfies the above divisibility condition. Otherwise, the map is CP indivisible. CP indivisibility is one accepted definition for quantum non-Markovian evolution~\cite{rivas_entanglement_2010}. In our present work, quantum Markovianity is identified  with CP divisibility. An equivalent criterion to the indivisibility of the intermediate map is the temporary negativity of the decay rates in the time-local master equation~\cite{hall2010}, which we would make use of in our future discussions.

There has been increasing attention given to the convex combinations of quantum channels. Recently, we considered~\cite{jagadish_convex_2020} convex combinations of the set of Pauli channels, each assumed to be a quantum dynamical semigroup (QDS), and obtained a quantitative measure of the resulting set of Markovian and non-Markovian (CP-indivisible) channels. The semigroup property of maps implies that 
\begin{equation}
\mathcal{E}(t_a, 0)\mathcal{E}( t_b,0 ) = \mathcal{E}( t_a + t_b,0 ).
\label{semigpdef}
\end{equation}
This leads to a time-independent generator $L$ such that
\begin{equation}
\mathcal{E}(t, t_0)=e^{L( t-t_0 )}.
\label{semigpdef0}
\end{equation}
A QDS implies that all the decay rates $\gamma_i$ in Eq. (\ref{gksl}) are constants, independent of time. Traditionally quantum Markovianity (or the absence of memory effects) was attributed to maps obeying the semigroup property, as in Eq. (\ref{semigpdef}), because it arguably represents a quantum extension of the Chapman-Kolmogorov equation in the context of classical Markovianity. However, one can have maps which are CP divisible [Eq. (\ref{cpdivdef})], but lack the semigroup property [Eq. (\ref{semigpdef})]. They are Markovian (as mentioned above), with the time-dependent decay rates $\gamma_i(t)$ being positive at all times. Therefore, one expects that if one were to consider convex combinations of channels which are Markovian but not semigroups, then this would correspondingly result in a larger measure for the set of non-Markovian channels obtained as the output of mixing. However, this turned out not to be the case, as we showed in~\cite{jagadish_nonqds_2020}, by considering the mixing of Pauli channels that are non semigroups. In fact, the fraction of non-Markovian channels is largest when obtained by mixing Pauli semigroups.  These results showed that the set
of CP divisible Pauli channels is not convex, which was quantified with a non-zero measure of the resulting CP-indivisible set.
Extension to qudit channels and the nonconvexity of the set of generalized Pauli Markovian channels was reported in~\cite{siudzinskajpa2020}.

In Eq. (\ref{cpdivdef}), if for a particular instant of time $t = t_{*}$ such that the composition law fails, the map $\mathcal{E}(t_{*},t_i)$ becomes
noninvertible where at the point $t_{*}$, the map $\mathcal{E}(t_b, t_{*}) =  \mathcal{E}(t_b, t_a) \mathcal{E}(t_{*} t_a )^{-1}$ is
undefined. Such points $t_{*}$ are known as singular points~\cite{hou_singularity_2012} of the corresponding channel. Correspondingly, the decay rates go to infinity at those points. It should be noted that despite the presence of singularities in a map, the dynamics can be perfectly regular~\cite{jagadish_measure2_2019, jagadish_initial_2021}.  Convex combinations of channels with singularities have been studied~\cite{utagi_singularities_2021,siudzinska_markovian_2021} recently. For Pauli channels, it was shown that singularities cannot be produced by mixing nonsingular channels. The effects of mixing in creating new singular points, shifting the singularities in time and also the total elimination of a singularity were also reported.

In this paper, we ask the following question. What convex combinations of generalized Pauli channels can lead to a semigroup? We outline the conditions under which a semigroup can be obtained. We find that the mixing coefficients and the associated probability distributions are interconnected in deciding this. Surprisingly, we find that noninvertibility of the input channels is necessary to produce a semigroup.  Section \ref{paulisec} discusses the case of Pauli channels. The results can be easily generalized to the case of generalized Pauli channels which is discussed in Sec. \ref{genpaulisec}. We conclude in Sec. \ref{conc}. 
\section{Pauli Dephasing channels
\label{paulisec}}
Consider the three Pauli channels 
\begin{equation}
\label{paulichanndef}
 \mathcal{E}_i^{p_{i}}(\rho)=(1-p_{i} (t))\rho + p_{i} (t) \sigma_i\rho \sigma_i,
 \end{equation} with varying $p_i (t) ; i= 1, 2,3$.
In what follows, we show that the only two possibilities of creating a semigroup under convex combinations of Pauli channels are\\
 (1) by mixing the same Pauli channel with varying $p_i (t)$, and \\
 (2) a combination of the three different Pauli channels with suitable choices of $p_i (t)$. 
 Here, we see what choices of probability distributions lead to the semigroup.
 \subsection{Mixing same Pauli channels with different $p(t)$ \label{sec:same}}
  Without loss of generality, consider the convex combination $a \mathcal{E}_1^{p}+ (1-a) \mathcal{E}_1^{q}$. 
  The associated differential form of the map is 
  \begin{equation}
\label{}
L(\rho) = \gamma_{1}(\sigma_1\rho\sigma_1-\rho),
\end{equation}
  with the decay rate being
    \begin{equation}
  \label{decaysamemix}
\gamma_{1} = \frac{  (1-a) \dot{p}(t)+a \dot{q}(t)}{1-2 (1-a) p(t)-2 a q(t)}
\end{equation}
For the convex combination to yield a semigroup, $\gamma_1$ should be a constant. To this end, let 
  \begin{equation}
p(t) = f(t) - \frac{q(t)}{\alpha}, \alpha \in \mathbb R.
\end{equation}
 One can now see that 
  \begin{equation}
 f(t) = \frac{1-e^{-ct}}{2} \frac{1}{1-a}, \alpha = \frac{1-a}{a}
\end{equation}
yields a semigroup. We can see that the choice of mixing coefficients and $p(t)$ are interconnected to yield a semigroup. Another important point to notice is that the function $q(t)$ can be arbitrary. This means that one could get semigroups even if one of the channels is non-invertible. 
Let us demonstrate this with an example. Let us consider the convex combination of two channels as
\begin{equation}
\mathcal{E}_\ast(p) = \alpha\mathcal{E}_x^{p} + (1-\alpha) \mathcal{E}_x^{q}.
\label{twomixeq}
\end{equation}
 For equal combinations of the two $\mathcal{E}_x$ channels. i.e., $a=0.5$,
  \begin{equation}
p(t) = 1-e^{-ct} - q(t),
\end{equation}
with $q(t)$ being arbitrary, leads to a semigroup. The differential form of the channel can be evaluated to be 
\begin{equation}
\label{megen}
L(\rho) = \frac{c}{2}(\sigma_x\rho\sigma_x-\rho),
\end{equation}
Note that we work in the Pauli basis $\{\mathbbm{1}, \sigma_i\}$.
The decay rate $c/2$ is a constant which is indicative of a semigroup. 
$q(t)$ can be any function, even with singularities, but still the convex sum would lead to a semigroup. This also can be viewed as an exercise of mixing non-invertible channels. Note that this construction is not restricted to mixing two channels of the same type. One could mix any number of channels, choosing the appropriate probability distributions.

In the remainder of this article, by a (generalized) Pauli channel in the input, we shall refer to a single $d+1$ ``dephasing'' channel, and not a mixture of two or more of the same type of Pauli channel, as was considered in this section.
 
 \subsection{Mixing three different Pauli channels}
 Consider the three Pauli channels given by Eq. (\ref{paulichanndef}), each mixed in proportions of $x_i, i = 1,2,3$ such that $\sum_{i}x_i=1$.  The corresponding time-local master equation for the channel $\sum_{i} x_{i} \mathcal{E}_i^{p_{i}}$ can be evaluated to be
\begin{equation}
\label{eqngen}
L(\rho) =\sum_{i=1}^{3} \gamma_{i}(\sigma_i\rho\sigma_i-\rho),
\end{equation}
and with
\begin{eqnarray}
\label{decayrateseqn}
\gamma_{1} &=& A-B+C, \nonumber\\
\gamma_{2} &=& -A+B+C,\nonumber\\
\gamma_{3}&=& A+B-C,
\label{eq:g1g2g3}
\end{eqnarray}
with
\begin{eqnarray}
A&=&\frac{x_1 \dot{p_1}(t)+x_3 \dot{p_3}(t)}{2[1-2 x_1 p_1(t)-2 x_3 p_3(t)]},\nonumber\\
B &=& \frac{x_2 \dot{p_2}(t)+x_3 \dot{p_3}(t)}{2[1-2 x_2 p_2(t)-2 x_3 p_3(t)]},\nonumber\\
C&=&\frac{x_1 \dot{p_1}(t)+x_2 \dot{p_2}(t)}{2[1-2 x_1 p_1(t)-2 x_2 p_2(t)]}.
\label{eq:ABC}
\end{eqnarray}

For the channel to be a semigroup, all the decay rates $\gamma_i$ need to be constants.  
This entails that $A, B,$ and $C$ must be constants, since $C = \frac{\gamma_{1} + \gamma_{2}}{2}$, etc. It can be shown that the only functional form of $p_j$ that ensures that
$A, B, C$ are constant is that 
\begin{equation}
p_i(t) = \frac{1-e^{-ct}}{4x_i}. 
\label{eq:form}
\end{equation}
The corresponding decay rate for the input channels would be  $\displaystyle \frac{c}{2+e^{c t}(4 x_{i}-2)}$ which diverges for $t^\ast \equiv \frac{1}{c}\ln [\frac{1}{1-2x_i}]$.
Here the semigroup corresponds to $x_i {:=} \frac{1}{2}$. 

\begin{Example}
Consider an equal mixing of the three Pauli channels. All $x_i$ are 1/3 each. Then, all the $p_i (t)$ are the same with $\displaystyle p(t) = \frac{3}{4} (1-e^{-ct})$, which leads to a semigroup. For the Pauli channel $\mathcal{E}_x^{p}$, say the associated decay rate would be $\displaystyle \frac{3 c}{6-2 e^{c t}}$, which diverges for $t^\ast \equiv \frac{\log3}{c}$, indicating that the individual channels are noninvertible. 
\end{Example} 
We remark that the above result generalizes for qubits a result reported in \cite[Remark 2]{siudzinska_markovian_2021}, which shows that a semigroup results via an equal mixing of three noninvertible Pauli channels. The following result gives the general case here for qubits.
\begin{Theorem} 
In order to produce a semigroup by mixing Pauli channels, all three must be mixed. Of these, at least two should be noninvertible.
\label{rem:only1}
\end{Theorem}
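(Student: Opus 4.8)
The plan is to reduce the semigroup condition entirely to the requirement that $A$, $B$, and $C$ in Eq.~(\ref{eq:ABC}) be simultaneously constant. This is necessary and sufficient because the decay rates $\gamma_i$ are an invertible linear combination of $A,B,C$ (from Eq.~(\ref{eq:g1g2g3}) the transformation matrix has determinant $-4$, and indeed $C=(\gamma_1+\gamma_2)/2$, etc.), so all $\gamma_i$ are constant iff $A,B,C$ are. I would then establish the two assertions separately: first that no genuine two-channel mixture yields a (nontrivial) semigroup, forcing every weight $x_i$ to be nonzero; and second that, once all three are present, the constraint $\sum_i x_i=1$ forces at least two input channels to be noninvertible.

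For the first assertion I would argue by contradiction, taking without loss of generality $x_3=0$ so that only $\mathcal{E}_1^{p_1}$ and $\mathcal{E}_2^{p_2}$ are mixed (any proper mixture uses at least two channels, and by symmetry it suffices to rule out the two-channel case). Imposing that $A$ and $B$ be constant turns each $\frac{x_i\dot{p}_i}{1-2x_ip_i}$ into a constant; integrating with the natural initial condition $p_i(0)=0$ gives $1-2x_1p_1=e^{-4At}$ and $1-2x_2p_2=e^{-4Bt}$. Substituting these into $C$ collapses it to $C=\frac{Ae^{-4At}+Be^{-4Bt}}{e^{-4At}+e^{-4Bt}-1}$, and demanding that this equal its $t=0$ value $A+B$ for all $t$ reduces to the identity $A(e^{-4Bt}-1)+B(e^{-4At}-1)=0$. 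The main obstacle is disposing of this functional identity cleanly: invoking the linear independence of $\{1,e^{-4At},e^{-4Bt}\}$ (and treating $A\neq B$ and $A=B$ separately) forces $A=B=0$, hence $p_1=p_2=0$, leaving only the trivial identity semigroup. Therefore a nontrivial semigroup requires all three channels, and the established uniqueness of the form $p_i(t)=\frac{1-e^{-ct}}{4x_i}$ then applies with every $x_i>0$.

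For the second assertion I would exploit the already-computed input decay rate $\frac{c}{2+e^{ct}(4x_i-2)}$. Its denominator vanishes at a real positive time precisely when $\frac{1}{1-2x_i}>1$, i.e. when $0<x_i<\tfrac12$; for $x_i=\tfrac12$ the rate equals the constant $c/2$ (the input is itself a semigroup), and for $x_i>\tfrac12$ the denominator $2+e^{ct}(4x_i-2)$ stays positive for all $t\geq0$, so in both of these cases the input channel is invertible. Hence an input channel is noninvertible iff $x_i<\tfrac12$. The conclusion is then pure arithmetic: since $\sum_i x_i=1$ with every $x_i>0$, at most one weight can satisfy $x_i\geq\tfrac12$ (two such weights would already sum to at least $1$, forcing the third to be nonpositive), so at least two weights satisfy $x_i<\tfrac12$ and the corresponding two channels are noninvertible. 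The bound is tight: the equal mixture $x_i=\tfrac13$ of the Example gives three noninvertible inputs, while $x_1=\tfrac12,\,x_2=x_3=\tfrac14$ realizes exactly two.
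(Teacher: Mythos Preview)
Your overall strategy is sound and closely parallels the paper's: reduce the semigroup condition to the simultaneous constancy of $A,B,C$, rule out the two-channel case by contradiction, and then for the three-channel case use the form $p_i(t)=\frac{1-e^{-ct}}{4x_i}$ together with $\sum_i x_i=1$ to bound the number of invertible inputs. The second half of your argument (the $x_i<\tfrac12$ criterion and the pigeonhole on the weights) is essentially identical to the paper's. For the first half you attack the constraints in the opposite order from the paper: the paper imposes constancy of $C$ first, invokes the form~(\ref{eq:form}) for $p_1,p_2$, and then observes that $A,B$ fail to be constant; you instead fix $A,B$ first (which decouple when $x_3=0$), solve them explicitly, and then test $C$. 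Your route is arguably more self-contained, since it does not appeal to the uniqueness assertion preceding Eq.~(\ref{eq:form}).

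There is, however, a small gap in your linear-independence step. The identity $A(e^{-4Bt}-1)+B(e^{-4At}-1)=0$ does \emph{not} force $A=B=0$: if $A=0$ (and $B$ arbitrary) the left-hand side vanishes identically, since $e^{-4At}\equiv 1$; symmetrically for $B=0$. So the correct conclusion is only $AB=0$. The case $A=0$, $B\neq 0$ gives $p_1\equiv 0$, i.e.\ $\mathcal{E}_1^{p_1}$ is the identity map, and the ``mixture'' collapses to $x_1\,\mathrm{id}+x_2\,\mathcal{E}_2^{p_2}=\mathcal{E}_2^{x_2p_2}$, a single Pauli dephasing channel rather than a genuine two-channel mixture. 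You should state this explicitly: the functional identity forces at least one input to be trivial, which is excluded by the hypothesis that two distinct Pauli channels are being mixed nontrivially. With that one sentence added, your argument goes through.
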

\begin{proof}
Suppose that only two Pauli channels are mixed. To that end, without loss of generality, let $x_3 \equiv 0$. As before, for a semigroup output, we require that $A, B$ and $C$ must be constants. With Eq. (\ref{eq:ABC}), we find that the form given by Eq. (\ref{eq:form}) should be used for $p_1(t)$ and $p_2(t)$ to ensure that $C$ is constant. But this would mean that $A$ and $B$ are not constants. Therefore, all three Pauli channels must be mixed in order to produce a semigroup. 

As seen above, the mixing coefficients and the probability functions $p_i (t)$ are interconnected. For an input Pauli channel to be a semigroup, $p (t) = (1-e^{-ct})/2$, and the corresponding mixing coefficient will be 0.5.  The remaining two channels are to be mixed in arbitrary proportions such that their sum is 0.5, which implies per Eq. (\ref{eq:form}) that these two cannot be semigroups. If the input Pauli channel is invertible, but not a semigroup, then $x_i>0.5$. Thus, for the convex combination of three Pauli channels to yield a semigroup, at most one of the individual channels can be invertible (in particular, a semigroup).
\end{proof}
A consequence of this result is that if the three input channels are semigroups, then we should have $x_i = \frac{1}{2}$ for all three Pauli channels, which is obviously ruled out.
\begin{Remark}
	It was shown in~\cite{jagadish_convex_2020} that mixing the three different Pauli semigroups could lead to Markovian output. However, the above result entails that their convex hull cannot contain any semigroup. For instance, mixing the above three Pauli semigroups characterized by the decoherence parameter $c>0$ yields a channel with the  decay rates in Eq. (\ref{eqngen}) all being the same, namely, $\displaystyle \frac{c}{2(2+e^{c t})}$, which is not a constant.
	\end{Remark}
As a matter of fact, it may be recollected that any finite degree of mixing of two Pauli semigroups results in a non-Markovian output \cite{jagadish_convex_2020}. We remark that the above result is not in contradiction with those reported in \cite[Section III \& Section IV]{wudarski2016}, where the mixing of two CPTP projector channels is shown to produce a semigroup. The individual Pauli dephasing channels $\mathcal{E}$ considered here clearly lack this projection property, i.e., $\mathcal{E}^2 \ne \mathcal{E}$. In the same vein, one can show that mixing two Pauli depolarizing channels can be made to yield a semigroup.
\section{Generalized Pauli Channels}
\label{genpaulisec}
Nathanson and Ruskai  introduced~\cite{ruskai2007} a generalization of the Pauli channels to the case of qudit evolution. The construction is based on the concept of {\it mutually unbiased bases} (MUBs). If one has $d+1$ orthonormal bases $\{|\phi_i^{(\alpha)}\rangle,i=0,\ldots,d-1\}$ in $\mathbb{C}^d$ such that  $\langle\phi_i^{(\alpha)}|\phi_j^{(\beta)}\rangle|^2=1/d$ whenever $\alpha\neq\beta$, the bases are MUB. Using the rank-1 projectors $P_i^{(\alpha)}=|\phi_i^{(\alpha)}\rangle\langle\phi_j^{(\alpha)}|$ , one can introduce $d+1$ unitary operators,
\begin{equation}
U_\alpha=\sum_{i=0}^{d-1}\omega^iP_i^{(\alpha)},\qquad \omega=e^{2\pi \imath/d}.
\end{equation}
The time-dependent generalized Pauli channels are defined by~\cite{ruskai2007}
\begin{equation}
\mathcal {E}_{G} = p_0 (t)\mathbbm{1}+\frac{1}{d-1}\sum_{i=1}^{d+1}p_\alpha (t) \mathcal{U}_\alpha,
\label{genpaulidef}
\end{equation}
where $p_\alpha (t)$ is a probability distribution and the $d+1$ CP maps $\mathcal{U}_\alpha$ are 
\begin{equation}
\mathcal{U}_\alpha [\rho]=\sum_{k=1}^{d-1}U_{\alpha}^{k}\rho U_{\alpha}^ {k \thinspace\dagger}.
\end{equation}
One can see that for $d=2$, Eq. (\ref{genpaulidef}) leads to the Pauli channel.
\subsection{Convex Combinations of Generalized Pauli Channels}
For $d$ dimensions, one has $d+1$ generalized Pauli channels. We consider the convex combinations of them and analyze the conditions under which they lead to a semigroup.  Consider the convex combination of $d+1$ generalized Pauli channels ($i =1,...d+1) $,
\begin{equation}
\mathcal {E}_{G \thinspace i} ^{p_{i}}(\rho)=(1-p_{i}(t))\rho + p_{i}(t)\sum_{k=1}^{d-1}U_{i}^{k}\rho U_{i}^ {k \thinspace\dagger},
\end{equation} with varying $p_i (t)$, each mixed in proportions of $x_i$ such that $\sum_{i=1}^{d+1} x_i=1$.  The map $\mathcal {E}_{G \thinspace i}$ satisfies the eigenvalue relation
\begin{equation}
\mathcal {E}_{G \thinspace i}[U_{i}^{k}] = \lambda_i U_{i}^{k} , k = 1, ..., d-1.
\end{equation}
The eigenvalues of the resultant map under convex combination can be evaluated to be
\begin{equation}
\label{eiggenpauli}
\lambda_i = 1-\frac{d}{d-1} \sum_{j\neq i} x_j p_{j}(t).
\end{equation}
For the mixture to yield a semigroup, the eigenvalues $\lambda_i = e^{-rt}$. 
This clearly tells us that the only two possibilities of getting a semigroup are either\\
\begin{enumerate}
\item by mixing the same channel with varyious $p(t)$'s (dealt with in Sec. \ref{sec:same}) or
 \item by mixing all the  $d+1$ channels with suitable corresponding decoherence functions $p(t)$. [Each channel could itself be a mixture of versions of the channel with various $p(t)$.]
\end{enumerate}
The following gives an example of the first point above, generalizing the example for qubits discussed in Sec. \ref{paulisec}.
 \begin{Example}
Consider the mixing of the same channel with two different $p(t)$ and $q(t)$. Without loss of generality, consider the convex combination $a \mathcal {E}_{G \thinspace 1} ^{p} + (1-a) \mathcal {E}_{G \thinspace 1} ^{q} $. The eigenvalues of the map, following Eq. (\ref{eiggenpauli}), are 
\begin{equation}
\lambda_i =  1- \frac{d}{d-1}  p(t) - \frac{d}{d-1} a q(t) + \frac{d}{d-1} a p(t) 
\end{equation}
For a convex combination to yield a semigroup (i.e., $\lambda_i = e^{-ct}$),  one can see that the choice should be 
\begin{equation}
p(t) =  \frac{1-e^{-ct}}{d} \frac{d-1}{1-a} - \frac{ a q(t)}{1-a },
\end{equation}
with $q(t)$ being arbitrary.
This can be arbitrarily extended to the mixing of more than two channels. 
\end{Example}
Consider the case of the mixing of all $d+1$ channels. The eigenvalues $\lambda_i $ of the map should be $ e^{-rt}$ for the result to be a semigroup. From Eq. (\ref{eiggenpauli}), we see that the sum has $d$ terms and, therefore, the only choice for the input channel decoherence functions that can guarantee a semigroup is found to be
\begin{equation}
\label{decfuncgen}
p_{i}(t) =  \frac{d -1}{x_i d^2} (1-e^{-ct}).
\end{equation}
In particular, we remark that the decoherence functions are dependent on the mixing coefficients $x_i$. The corresponding eigenvalue diverges for $t^\ast \equiv \frac{1}{c}\ln [\frac{1}{1-d x_i}]$.
This gives us the following result,  which generalizes Theorem \ref{rem:only1}.

\begin{Theorem}
\label{Remark3}
All $d+1$ qudit Pauli channels must be mixed in order to produce a semigroup. Out of the $d+1$ channels, at least $d$ channels must be noninvertible.
\end{Theorem}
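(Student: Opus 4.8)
The plan is to prove the two assertions separately, building directly on the eigenvalue analysis preceding the statement: the convex combination is a semigroup precisely when every eigenvalue in Eq.~(\ref{eiggenpauli}) is a pure exponential with a constant rate, in which case the decoherence functions are pinned to the form in Eq.~(\ref{decfuncgen}). The whole argument parallels the proof of Theorem~\ref{rem:only1}, the one genuinely new ingredient at general $d$ being a probability-validity bound on the weights that is invisible at $d=2$.

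To show that all $d+1$ channels are needed, I would argue by contradiction, mirroring the ``$A,B,C$ must be constant'' step of Theorem~\ref{rem:only1}. Suppose one weight vanishes, say $x_{d+1}=0$, and write $\Sigma=\sum_{j=1}^{d}x_j p_j(t)$. Then Eq.~(\ref{eiggenpauli}) gives $\lambda_{d+1}=1-\frac{d}{d-1}\Sigma$ and $\lambda_i=1-\frac{d}{d-1}(\Sigma-x_ip_i)$ for $i\le d$. Imposing $\lambda_i=e^{-r_it}$ in every direction and eliminating $\Sigma$ yields the single functional identity $\sum_{i=1}^{d}e^{-r_it}=1+(d-1)e^{-r_{d+1}t}$, required to hold for all $t$. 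Since exponentials with distinct rates are linearly independent, matching coefficients forces one of $r_1,\dots,r_d$ to vanish and the remaining $d-1$ to equal $r_{d+1}$; substituting back gives $x_ip_i\equiv 0$ for all but one of the mixed channels. Hence a vanishing weight collapses the ``mixture'' to a single dephasing channel, the scenario of Sec.~\ref{sec:same}, not a genuine combination of distinct generalized Pauli channels, so all $d+1$ must carry positive weight.

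For the noninvertibility count I would use Eq.~(\ref{decfuncgen}) together with two inequalities on each $x_i$. The eigenvalue $1-\frac{d}{d-1}p_i(t)$ of the individual channel $\mathcal{E}_{G\,i}^{p_i}$ reaches zero, rendering it noninvertible, exactly at the finite singular time $t^{\ast}=\frac1c\ln[1/(1-dx_i)]$, which is real and positive iff $x_i<1/d$; equivalently, channel $i$ is invertible iff $x_i\ge 1/d$. Separately, $p_i(t)$ in Eq.~(\ref{decfuncgen}) saturates at $\frac{d-1}{x_i d^2}$ as $t\to\infty$, and demanding that this remain a legitimate probability, $p_i\le 1$, forces $x_i\ge\frac{d-1}{d^2}$ for \emph{every} channel. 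Letting $k$ be the number of invertible channels and splitting $\sum_i x_i=1$ into invertible and noninvertible contributions gives $1\ge \frac{k}{d}+(d+1-k)\frac{d-1}{d^2}$, which simplifies to $k\le 1$. Thus at most one channel is invertible, i.e.\ at least $d$ of the $d+1$ channels are noninvertible.

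The step I expect to be the crux is the validity bound $x_i\ge\frac{d-1}{d^2}$: without it, the normalization alone only excludes having $d$ or more invertible channels, yielding the weak conclusion ``at least two noninvertible,'' and it is precisely the probability-validity of the sharply peaked $p_i$ in Eq.~(\ref{decfuncgen}) that tightens the count to ``at least $d$.'' This bound is effectively vacuous at $d=2$, where $\frac{d-1}{d^2}=\frac14$ and two invertible channels (each with $x_i\ge\frac12$) already overflow the normalization once all three are required to be mixed, which is why it never surfaced in Theorem~\ref{rem:only1}. A secondary point to handle with care is the linear-independence argument in the first part, in particular excluding the degenerate case $r_{d+1}=0$, which corresponds to the trivial identity evolution rather than to a decaying semigroup.
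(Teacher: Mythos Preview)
Your argument is correct, and for the second assertion it matches the paper's proof almost verbatim: both hinge on the probability--validity bound $x_i\ge\frac{d-1}{d^2}$ coming from $p_i\le1$ in Eq.~(\ref{decfuncgen}), together with the invertibility threshold $x_i\ge\frac{1}{d}$, and your inequality $1\ge\frac{k}{d}+(d+1-k)\frac{d-1}{d^2}\Rightarrow k\le1$ is just an algebraic repackaging of the paper's averaging statement. For the first assertion you diverge from the paper. The paper does \emph{not} run a separate contradiction analysis with one weight set to zero; it simply reuses the bound $x_j\ge\frac{d-1}{d^2}$ obtained from Eq.~(\ref{decfuncgen}) to conclude immediately that no $x_j$ can vanish. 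Your route via the functional identity $\sum_{i=1}^{d}e^{-r_it}=1+(d-1)e^{-r_{d+1}t}$ and linear independence of exponentials is longer but buys something: it does not presuppose the equal--rate form of Eq.~(\ref{decfuncgen}) and therefore rules out \emph{any} semigroup (with possibly distinct $r_i$) arising from fewer than $d+1$ channels, whereas the paper's shortcut tacitly rests on the single--rate ansatz used to derive Eq.~(\ref{decfuncgen}). Conversely, the paper's argument is one line and dovetails with the second half of the proof, since the same inequality $x_j\ge\frac{d-1}{d^2}$ does double duty.
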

\begin{proof}
In Eq. (\ref{decfuncgen}), the left hand side is at most, 1. It therefore follows that for any channel $j$,
\begin{equation}
x_j \ge \frac{d-1}{d^2},
\label{eq:minx}
\end{equation}
which implies that none of the $d+1$ channels can be mixed with vanishing probability.

For a generalized Pauli channel to be a semigroup, decoherence function $p(t) = \frac{d -1}{d} (1-e^{-ct})$. In light of Eq. (\ref{decfuncgen}), this implies that for the input channels, if $x_i \ge 1/d$, then it is invertible, and in particular a semigroup when equality holds. If $x_i < 1/d$, then the input channel is noninvertible. Suppose one of the input channels (say the $j$th) is invertible, meaning that $x_j \ge \frac{1}{d}$. Then the average probability $x_i$ per channel among the remaining channels is, at most, $\frac{d-1}{d^2}$, which saturates Eq. (\ref{eq:minx}). Thus, if even one more invertible channel is included, then Eq. (\ref{eq:minx}) would be violated.
\end{proof}
An immediate consequence of the above is that mixing all generalized Pauli dephasing semigroups would never result in a semigroup. 
\section{Conclusions}
\label{conc}
In this paper, we studied the set of Pauli and generalized Pauli channels and have shown the most general conditions under which a semigroup is obtained under convex combinations of them. If all the input channels are semigroups, we would never get a semigroup upon convex combination. Counterintuitively, noninvertibility of the individual channels is mandatory to produce a semigroup. A semigroup is considered to be the ``most" Markovian according to the various definitions of quantum Markovianity. Hence, it is quite surprising that for a convex combination to yield a semigroup, most of the input channels have to be even noninvertible.

\section{Acknowledgements:}
V.J. acknowledges financial support by the Foundation for Polish Science
through TEAM-NET project (contract no. POIR.04.04.00-00-17C1/18-00). R.S.   acknowledges the support of Department of Science and Technology (DST), India, Grant No.: MTR/2019/001516.
The work of F.P. is based upon research supported by the South African Research Chair Initiative of the Department of Science and Innovation and National Research Foundation (NRF) (Grant UID: 64812).  \vspace{-3 mm}

\end{document}